\begin{document}
\newtheorem{theorem}{Theorem}
\newtheorem{corollary}[theorem]{Corollary}
\newtheorem{lemma}[theorem]{Lemma}
\newtheorem{defn}{Definition}
\newcommand{\R}{\ensuremath{\mathbb{R}}}
\newcommand{\Rbar}{\ensuremath{\bar{\mathbb{R}}}}
\newcommand{\Sch}{\ensuremath{\mathcal{S}}}
\newcommand{\D}{\ensuremath{\mathcal{D}}}
\newcommand{\U}{\ensuremath{\mathcal{U}}}
\newcommand{\K}{\ensuremath{\mathcal{K}}}
\newcommand{\N}{\ensuremath{\mathcal{N}}}
\newcommand{\SA}{\ensuremath{\mathcal{SA}}}
\newcommand{\I}{\ensuremath{\mathbf{I}}}
\newcommand{\IRT}{\ensuremath{\mathbf{I}(\R^T)}}
\newcommand{\triple}{\ensuremath{(x,N,I[N])}}

\def\polhk#1{\setbox0=\hbox{#1}{\ooalign{\hidewidth
    \lower1.5ex\hbox{`}\hidewidth\crcr\unhbox0}}} 


\title{A global solution to the Schr{\"o}dinger equation: from Henstock to Feynman}


\author{Ekaterina S. Nathanson}
\email[]{enathanson@ggc.edu}
\affiliation{School of Science and Technology, \\Georgia Gwinnett College,\\1000 University Center Lane, Lawrenceville, GA 30043}

\author{Palle E. T. J{\o}rgensen}
\email[]{palle-jorgensen@uiowa.edu}
\homepage[]{http://www.math.uiowa.edu/~jorgen/}
\affiliation{Department of Mathematics, \\University of Iowa, Iowa City, IA 52242.}


\date{\today}

\begin{abstract}
\indent One of the key elements of Feynman's formulation of non-relativistic quantum mechanics is a so-called Feynman path integral. It plays an important role in the theory, but it appears as a postulate based on intuition rather than a well-defined object. All previous attempts to supply Feynman's theory with rigorous mathematics have not been satisfactory. The difficulty comes from a need to define a measure on the infinite dimensional space of paths and to create an integral that would possess all of the properties requested by Feynman. \\
\indent In the present paper, we consider a new approach to defining the Feynman's path integral, based on the theory developed by P. Muldowney. Muldowney uses the Henstock integration technique, and non-absolute integrability of the Fresnel integrals in order to obtain a representation of the Feynman's path integral as a functional. This approach offers a mathematically rigorous definition supporting Feynman's intuitive derivations. But in his work, Muldowney gives only local in space-time solutions. A physical solution to the non-relativistic Schr{\"o}dinger equation must be global, and it must be given in the form of a unitary one-parameter group in $L^2(\R^n)$. The purpose of this paper is to show that one-dimensional Muldowney's local solutions may be extended to yield a global solution. Moreover, the global extension can be represented by a unitary one-parameter group acting in $L^2(\R)$.
\end{abstract}

\pacs{02.30.Cj, 02.30.Sa, 02.30.Tb, 03.65.-w.}


\keywords{Schr{\"o}dinger equation, Feynman's path integral, Henstock integral, path-space average, global solution, unitary one-parameter group.}

\maketitle
\tableofcontents
\section{1. Introduction}
\subsection{1.1 Motivation}
\indent We consider three problems: (I) the definition of the Feynman path integral, (II) an approach to Feynman's question via the Henstock integral, introduced by Muldowney \cite{Mul12}; and (III) a representation of the solution to the non-relativistic Schr{\"o}dinger equation as a path-space average. Our emphasis here is on the word ``average," and we make this precise below.

\indent  In more detail, our first question (I) entails a closer look at Feynman's suggestion for a representation of the solution to the non-relativistic Schr{\"o}dinger equation as a path-space integral \cite{Fey2010, Fey511, Fey512}. By the Schr{\"o}dinger equation, we mean
\begin{eqnarray}
\frac{\partial \psi}{\partial t}&=&\frac{1}{i\hbar}H\psi, \label{eq:Schr_eq}\\
\psi(0,x)&=&\psi_0(x), \nonumber
\end{eqnarray}
where $\psi_0\in L^2(\R^n)$, with $n$-fixed; and the Schr{\"o}dinger operator has the form
\begin{equation}
 H = -\frac{\hbar^2}{2m}\Delta_x+V(x); \label{eq:Schr_op}
\end{equation}
or setting the constant $\hbar=1$, we have
\begin{eqnarray}
\frac{\partial \psi}{\partial t}&=&i\left[\frac{1}{2m}\Delta_x-V(x)\right]\psi, \\
\psi(0,x)&=&\psi_0(x). \nonumber
\end{eqnarray}
By a Feynman integral, we mean a formal representation
\begin{equation}
\psi(x,t)=\textrm{"const"}\int_{\Omega_x}e^{iS(\omega, t)}\psi_0(\omega(t))\D\omega, \label{eq:path_int}
\end{equation}
where $\Omega_x$ is the space of all paths starting at $x$, $\omega:\R\ni t \mapsto \omega(t)\in \R^n$, $\omega(0)=x$.\\
\indent In the more traditional approach in earlier literature to this problem (see e.g., \cite{Nel64}), the effort by many authors has concentrated on answering Feynman's question with variations of what goes by the name ``Feynman integral", and hence a Feynman measure. It is well known, that a mathematical rigorous version of Feynman measure is fraught with difficulties, if not contradictions. While the relevant literature following this idea, and aiming for a Feynman measure, is vast, we mention here only that it is known that there is no sigma-additive positive candidate of Feynman measure; that is ``the Feynman measure'' is not positive measure on path-space. Also, the action term $S(\omega, t)$ in (\ref{eq:path_int}) is given by the time integral $S(\omega,t)=\int_0^t\left(\frac{m}{2}\dot{\omega}^2-V(\omega(s))\right)ds$, and is interpreted only as a formal limit ($n\rightarrow \infty$) of 
\begin{equation}
S_n(x_0,\cdots, x_n,t)=\sum_{j=1}^n\left(\frac{m}{2}\left(\frac{x_j-x_{j-1}}{t/n}\right)^2-V(x_j)\right)\frac{t}{n}. \label{eq:action_term}
\end{equation}
Nelson, in \citep{Nel64}, suggested the following approach to derive the Feyman path integral (\ref{eq:path_int}). Fix $m$ and $V$, and set operators:
\begin{align}
	K^t\psi_0(x)&=\left(\frac{1}{2\pi it}\right)^{1/2} \int_{\mathbf{R}}e^{\frac{i}{2t}|x-y|^2}\psi_0(y)dy,\label{eq:DefOpKt} \\
M^t \psi_0 (x)&=e^{-\frac{i}{2}tV(x)}\psi_0(x). \label{eq:DefOpMt}
	\end{align}
 Then if $H$ in (\ref{eq:Schr_op}) is essentially selfadjoint, Trotter's formula converges in the strong operator topology on $\mathcal{B}(L^2(\R^n))$ to the solution (\ref{eq:path_int})  of the Schr{\"o}dinger equation (\ref{eq:Schr_eq}):
\begin{eqnarray}
	\psi(\cdot,t)&=&U^t\psi_0 \nonumber \\
			&=& \exp\left(it\left(\frac{\Delta}{2m}-V\right)\right)\psi_0\nonumber \\
			& =& \lim_{n  \rightarrow \infty}  \left(K^{t/n}M^{t/n}\right)^n\psi_0. \label{eq:Trotter_solution}
\end{eqnarray}

\subsection{1.2 Our approach}

\indent   Our approach (II) (see theorem \ref{thm:main_result} sec 4.3) serves to reconcile two physical principles. The first one (i) is that time in diffusion is not reversible (the Second Law of Thermodynamics). Mathematically, the corresponding Feynman-Kac formula is a diffusion equation. The second principle (ii) is at the foundation of the meaning of the  Schr{\"o}dinger equation. The Schr{\"o}dinger equation is an equation for the dynamics of wave-functions (representing quantum states). Energy is a conserved quantity, and the Schr{\"o}dinger equation is time reversible; hence represented by a unitary one-parameter group acting in $ L^2(\R^n)$. Our approach is motivated by a diffusion equation; but, to reconcile the two physical principles, (i) and (ii), we must introduce complex ``transition-probabilities." 

\indent Our suggestion here is to give a representation of the solution to the Schr{\"o}dinger equation, not as a path-space integral, but rather as a linear functional on functions defined on path-space. This approach will still yield the space-time solution  to the Schr{\"o}dinger equation, and as an average over path-space (3), but just not with respect to a positive sigma-additive measure. Still our approach has attractive features which are required of a solution to Feynman's question; … perhaps in the form Feynman should have asked it. Our solution is based on the Henstock integral, introduced by Muldowney \cite{Mul12}.   

\indent When the machinery of the Henstock integral is applied to the problem, we do get a solution to the Schr{\"o}dinger equation.  However Muldowney's work offers only local solution, so a solution defined only locally in space-time. Indeed this solution is purely local. But of course a physical solution to the non-relativistic Schr{\"o}dinger equation must be global, and it must be given in the form of a unitary one-parameter  group in $L^2(\R^n)$. Our purpose here is to show that a field of local solutions may be merged, and extended (in fact, uniquely) to yield a unitary one-parameter group $U_t$, where $t$ stands  for time, acting in $L^2(\R^n)$ and producing now a global solution to the Schr{\"o}dinger equation, with energy-conservation, and with time-reversal. 

 \indent Now there are alternative approaches: some based on analytic continuation in time,  and others in mass, thereby making a connection to Wiener measure; and the Feynman-Kac formula for the solution of the inhomogeneous heat equation. Since Wiener measure is indeed well defined as a positive and sigma-additive measure on path-space, this approach has shown some promise, but it has also run up against some difficulties, see e.g., \cite{Nel64}. For example, even in favorable cases, there are analytic continuation from Wiener measure, but valid only for some values of mass parameter in the corresponding Schr{\"o}dinger equation.  Even so, the analytic continuation lacks the kind of direct physical interpretation implicit in Feynman's original question.
\subsection{1.3 The Trotter Formula}
\indent As always, the Hamiltonian operator for the Schr{\"o}dinger equation is a non-commutative sum of a kinetic term, and a potential term. In general such a sum is only formally selfadjoint, but not selfadjoint in the sense of having a unique spectral resolution; the sum may have J. von Neumann defect indices different from (0, 0).  Therefore, as a premise in our theorem, we must assume that this sum is indeed essentially selfadjoint as an operator in $L^2(\R^n)$ (if there are $N$ particles, then $n = 3 N$). Fortunately, there are already powerful results available (see e.g., \cite{Kat95}) with realistic conditions on the potential which imply essential selfadjointness. And hence they apply to all physical potentials. With the tools mentioned above, we then show that the following two solutions agree: a local one and one based on a unitary one-parameter group. The first one, the local solution is based on an application of the Henstock integral, and the other, the global solution, is based on an approximation; in detail, a global time-space solution to the Schr{\"o}dinger equation as a unitary one-parameter group  $U_t$, is achieved with the use of a Trotter-product formula approximation; so an infinite product representation; and with a Lagrangian representation in phase space. The convergence of the Trotter-product formula in turn requires that essentially selfadjointness holds for the Hamiltonian operator under consideration.  

\indent The study of the Schr{\"o}dinger operator and their realization as generators of one-parameter groups of unitary operators has been studied over decades. While these approaches vary, they are all different from ours. The questions addressed differ from ones of these investigations. To give a sample, we mention the following papers, and the works cited there \cite{MR0450732, MR0432063, MR0438178, MR0383469, MR0210416, MR0211713, MR0175537, MR0167152, MR0161190, MR0450732, MR2535462, MR1703898, MR1331982, MR0290680, MR0054135,MR1817625,   MR1817619, MR1771173}.

\section {2 Preliminaries}

\subsection{2.1 Henstock Integral in Finite-Dimensional Space}

\indent In this section, we will define the Henstock integral over a domain of a finite-dimensional space, $\R^n$. The set-up and most of the notation is taken from \cite{Mul12}.
\begin{defn} A cell or cylindrical interval $I$ of \R$^n$ is defined to be 
	$$I=I(N)=I_1\times\cdots\times I_n,$$
where  $N=\{1,\dots,n\}$ is the list of the dimensions; $I_{j}$, $1\leq j \leq n$, are one-dimensional intervals, that can be one of the following intervals: $(-\infty,a]$, $(a,b]$, $(b,\infty)$, $(-\infty, \infty)$. The collection of all cells in $\R^n$ is denoted by $\mathbf{I}(\R^n)=\{I(N) \}$. The union of a finite number of cells is called a figure and denoted be $E$. The collection of all figures in $\R^n$ is denoted by $\mathbf{E}(\R^n)$.
\end{defn}
To consider the Riemann sums for an integral over $\R^n$ we associate with each cell $I(N)$ a sample point or a tag: $x=(x_1,\dots,x_n)\in\R^n$. The sample point together with the cell form an associated pair $(x, I(N))$. We choose a sample point to be one of the vertices of the corresponding cell $I(N)$, that is for each $1\le j\le n$ we have\\
	\hspace*{30pt} i) $x_j=-\infty$ if $I_j=(-\infty,a]$,\\
	\hspace*{30pt} ii) $x_j=a$ or $x_j=b$ if $I_j=(a,b]$,\\
	\hspace*{30pt} iii) $x_j=\infty$ if $I_j=(b,\infty),$\\
	\hspace*{30pt} vi) $x_j=-\infty$ or $x_j=\infty$ if $I_j=(-\infty, \infty)$.\\
Since the tag points are allowed to take values of infinity, we need to introduce a real line with ``points at infinity" adjoint. Let's denote it by $\Rbar=\R\cup\{-\infty,\infty\}$.
\begin{defn}
	A gauge in \Rbar$^n$ is a positive function $\delta:\Rbar^n\rightarrow (0,\infty)$. An associated pair $(x,I(N))$ is said to be $\delta$-fine with respect to the given gauge $\delta(x)$, if for all $1\leq j\leq n$: $(x_j,I_j)$ is $\delta$-fine in \R, i.e.\\
	\hspace*{30pt}i) if $I_j=(a_j,b_j]$ then $|I_j|=b_j-a_j<\delta(x)$,\\
	\hspace*{30pt}ii) if $I_j=(-\infty, a_j]$ then $a_j<-\frac{1}{\delta(-\infty)}$,\\
	\hspace*{30pt}iii) if $I_j=(b_j, \infty)$ then $b_j>\frac{1}{\delta(\infty)}$.
\end{defn}
For the finite intervals $I_j=(a_j,b_j]$, the condition  $|I_j|<\delta(x)$ means that in the limiting process the length of the interval $I_j$ gets smaller as $\delta(x)$ becomes smaller. This condition is similar to the condition for choosing partitions in the definition of the proper Riemann integral on a bounded domain. The main difference is in the value for the bound. In the case of the Riemann integral, the bound is constant for each of the subintervals of a partition or the maximum of lengths of the subintervals is bounded by a constant. In the case of the Henstock integral, we replace the constant bound with a value of a gauge function $\delta(x)$. The gauge function $\delta(x)$ can take different values depending on a particular sample point $x$ and any particular properties of an integrand. This special feature of a gauge function makes some non-integrable functions in the Riemann or Lebesgue sense integrable in the sense of Henstock.
\\
\indent In the case of unbounded intervals, $I_j=(-\infty, a_j]$ and $I_j=(b_j, \infty)$, the corresponding conditions, $a_j<-\frac{1}{\delta(-\infty)}$ and $b_j>\frac{1}{\delta(\infty)}$,  imply that as $\delta(-\infty)$ and $\delta(\infty)$ become smaller, the absolute value of an upper boundary for $(-\infty, a_j]$ and the value of a lower boundary for $(b_j, \infty)$ become larger . Thus the unbounded intervals $I_j=(-\infty, a_j]$ and $I_j=(b_j, \infty)$ ``shrink" in size.
\begin{defn}
A partition in $\R^n$ is a finite collection $\mathcal{P}$ of disjoint cells $I(N)$ whose union is  $\R^n$. A division $\D$ in $\R^n$ is a finite collection of associated point-cell pairs $(x, I(N))$ whose cells $I(N)$ form a partition of $\R^n$. Given a gauge $\delta$ in $\Rbar^n$, a division $\D$ is called $\delta$-fine if each of the pairs $(x, I(N))\in\D$ is $\delta$-fine.
\end{defn}
An integrand in $\R^n$ is a function of associated points and cells. It is more intuitive to think about an integrand in the form of a product: $f(x)h(I(N))$. In this case a function $h(I(N))$ plays a role of a measure on the cells. But an integrand can be also given in a more general form $h(x, I(N))$. 
\begin{defn}
A function $h(x,I(N))$ is integrable on \R$^n$ in the Henstock sense with integral
	$$\alpha = \int_{\R^n}h(x,I(N)),$$
	if, given $\varepsilon>0$ there exists a gauge  $\delta$ in $\Rbar^n$ so that, for each $\delta$-fine division $D_\delta$ of $\R^n$, the corresponding Riemann sums satisfy
	$$\left|(D_\delta)\sum h(x,I(N))-\alpha\right|<\varepsilon$$
	or more explicitly 
	$$\left|\sum \{h(x,I(N)): (x,I(N)) \in D_\delta\}-\alpha\right|<\varepsilon.$$
\end{defn}
\indent We defined the Henstock integral of a function with domain of integration being the whole space $\R^n$, but similarly the definition of  the integral can be formulated for a figure $E$, a subset of $\R^n$.

\indent The definition of the Henstock integral relies on the fact that for every positive function $\delta(x)$, a $\delta$ - fine division $\D$ of the domain of integration can be found. This result is due to the Belgium mathematician Pierre Cousin (1895) and is known in the literature as Cousin's lemma or Cousin's theorem. The proof in a one-dimensional case can be found, for example, in \cite{Yee2000} or \cite{Mul12}. In the finite-dimensional case of $\R^n$ with $n>1$, the result can be obtained by applying the lemma in each dimension separately.

\subsection{2.2 Henstock Integral in Infinite-Dimensional Space}

In the previous section, we gave the definition of the Henstock integral in the finite-dimensional case. The gauge function $\delta(x)$ and the $\delta$-fine property of the point-cell pairs are essential for selecting appropriate partition in the definition of the Henstock integral in $\R^n$. For the case of infinite dimensions we will need to introduce more advanced kind of gauge and adjust gauge-fine property accordingly. In this section, we introduce necessary changes and then define the Henstock integral in the infinite-dimensional case.
\\
\indent For the purpose of applying the Henstock theory to the Feynman path integral, the domain of integration that we will consider, is $\R^T$. $\R^T$ is a product of $T$ copies of \R. By $T$ we denote a real interval, a subset of $(0,+\infty)$. Usually, we assume $T=(0 ,t]$  or $T=(0,+\infty)$. It would help our intuition if we would think of $\R^T$ as of the space of all real-valued functions $x(t)$ defined on $T$:
$$\R^T:=\{x~|~ x:T \mapsto \R\}.$$
Note, $\R^T$ includes both continuous and discontinuous functions, so it is not equal to the space of all paths, that are represented by continuous functions.\\
\indent Let us denote by $\N=\N(T)=\{N=\{t_1,\dots,t_n\}\subseteq T, n\in \mathbb{N}\}$ a collection of all finite subsets of points in $T$. Suppose all the sets $N=\{t_1,\dots,t_n\}$ are ordered in the increasing manner: $t_1<t_2<\dots<t_n$. 
\begin{defn} A cell or cylindrical interval $I$ of \R$^T$ is defined to be 
	\begin{align}
	I=I[N]&=I_1\times\cdots\times I_n\times \R^{T\backslash N} \nonumber \\
		  &= I_1\times\cdots\times I_n\times \prod\left\{\R^{\{t\}}~|~ t\in T\backslash N\right\}\nonumber
	\end{align}
where  $N=\{t_1,\dots,t_n\}\in \N(T)$, $I_j=I_{t_j}$ for $1\le j \le n$. Similarly to the finite-dimensional case, $I_j=I_{t_j}$ are the one-dimensional intervals of the form: $(-\infty,a]$, $(a,b]$, $(b,\infty)$ or $(-\infty, \infty)$. 
\end{defn}
\indent Note that to distinguish a finite-dimensional cell in \R$^N$ from an infinite-dimensional one in \R$^T$, we use parenthesis in the notation in the finite-dimensional case:
	$$I(N)=I_1\times\cdots\times I_n$$
	and brackets in the infinite-dimensional one. Thus a cell in \R$^T$ can be written in the following form:
	$$I[N]=I(N)\times \R^{T\backslash N}.$$
The collection of all cells in $\R^T$ is denoted by $\mathbf{I}(\R^T)=\{I[N] ~|~ N\in \N(T)\}$. A figure in $\R^T$ is the union of a finite number of cells, denoted by $E$. The collection of all figures in $\R^T$ is $\mathbf{E}(\R^T)$.
	\\
	\indent The form of functions that we will consider later, makes it appropriate to take a triple instead of a pair as a basic element of a partition. We consider triples with included set of restricted dimensions of the form $(x,N,I[N])$. The triple $(x,N,I[N])$ is called associated if $I[N]=I_1\times \cdots \times I_n \times \R^{T\backslash N}$, $N=\{t_1,\dots,t_n\}\in \N(T)$ and $x(N)=(x(t_1),\dots, x(t_n))$ is a vertex of $I_1\times \cdots \times I_n$, i.e. the corresponding finite-dimensional pair  $((x(t_1),\dots, x(t_n)), I(N))$ is an associated pair in $\R^N$. Note that if $(x, N, I[N])$ are associated and $t\in T\backslash N$, then $x(t)\in \R^{\{t\}}$ can be an arbitrary real number. 
\begin{defn}
	A gauge in \R$^T$ $\gamma$ is a pair of mappings $(L, \delta)$:
	\begin{align}
		L: \Rbar^T &\rightarrow \N(T) \nonumber\\
			x &\mapsto L(x) \nonumber
	\end{align}
and
	\begin{align}
		\delta: \Rbar^T\times \N(T) &\rightarrow (0,+\infty) \nonumber\\
			(x, N) &\mapsto \delta(x,N).	\nonumber
	\end{align}
 With $N=\{t_1,\dots, t_n\}\in \N(T)$, an associated triple $(x,N,I[N])$ is $\gamma=(L, \delta)$-fine or fine with respect to the gauge $\gamma=(L, \delta)$ if
	\begin{enumerate}
	\item $L(x)\subseteq N$.
	\item For all $t_j\in N$: $(x(t_j),I_j)$ is $\delta$-fine, i.e.\\
	i) if $I_j=(a_j,b_j]$ then $|I_j|=b_j-a_j<\delta(x,N)$,\\
	ii) if $I_j=(-\infty, a_j]$ then $a_j<-\frac{1}{\delta(-\infty,~N)}$,\\
	iii) if $I_j=(b_j, +\infty)$ then $b_j>\frac{1}{\delta(+\infty,~N)}$.
	\end{enumerate}
\end{defn}
The first condition above means that the set $L(x)$ is included in the dimensions where the cell $I[N]$ is restricted. The second one requires that the finite dimensional point-cell pair $((x(t_1),\dots, x(t_n)), I(N))$  is fine with respect to gauge $\delta$.

\indent Similarly to the finite-dimensional case, a division in $\R^T$ is a finite collection $D$ of point-cell pairs $(x,I[N])$ such that the corresponding triples \triple~are associated and the cells $I[N]$ are disjoint with union $\R^T$.
\begin{defn}
	If a gauge $\gamma=(L,\delta)$ is given, then a division $D$ is $\gamma$-fine, if each \triple$\in D$ is $\gamma$-fine.
\end{defn}
\begin{defn}
	A function $h$ of associated triples \triple~is integrable in \R$^T$ in the Henstock sense with integral
	$$\alpha = \int_{\R^T}h\triple,$$
	if, given $\varepsilon>0$ there exists a gauge  $\gamma$ so that, for each $\gamma$-fine division $D_\gamma$ of $\R^T$, the corresponding Riemann sums satisfy
	$$\left|(D_\gamma)\sum h\triple-\alpha\right|<\varepsilon$$
	or more explicitly 
	$$\left|\sum \left\{h\triple: \triple \in D_\gamma\right\}-\alpha\right|<\varepsilon.$$
\end{defn}
\indent Note that in the case of $\R^T$, the rule $\gamma$ for selecting associated triples in Riemann sums has two conditions. We still use a positive function $\delta(x)$ to bound the lengths of the restricted edges of each cell $I[N]$, but we also have an extra condition for the sets of restricted dimensions $N$. In the finite-dimensional integral over $\R^n$, the set of restricted dimensions was always fixed. Now it becomes a variable. The condition requires, that the set of restricted dimensions $N$ for each cell $I[N]$ in the partition, includes some minimal set of dimensions, given by a value of the gauge function $L(x)$ at the corresponding sample point. Thus if we make $\delta(x)$ successively smaller and $L(x)$ successively larger the cells in the corresponding $\gamma$ - fine associated triples, will be ``shrinking." This is the idea in the limiting process of the Henstock integral in the infinite-dimensional space. 
\\
\indent Besides gauges, the specific feature of the definition of the Hensock integral is the implicit fact, that for a given gauge $\gamma$, there exists a $\gamma$ - fine division $D_\gamma$ of $\R^T$, as in the finite-dimensional case. However, this time the proof is more technical and requires some extra tools from the theory of the Henstock integral.
\begin{theorem}
	Given any gauge $\gamma=(L,\delta)$ in $\R^T$, there exists a $\gamma$ - fine division $D_\gamma$ of $\R^T$.
\end{theorem}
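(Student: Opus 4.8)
The plan is to bootstrap the statement from the finite-dimensional Cousin lemma recalled at the end of Section 2.1, which is valid on $\Rbar^n$ for every $n$ and every positive gauge; the genuinely new difficulty is that for a cell $I[N]$ the restricted coordinate set $N$ is not given in advance but is forced, cell by cell, by the first component $L$ of the gauge evaluated at a tag $x$ that is itself still to be selected, and moreover the scale $\delta(x,N)$ against which the edges are tested depends on that same $N$. Accordingly I would organize the argument in three steps: (i) produce a single finite ``master'' set $M\in\N(T)$ of coordinates large enough to absorb the restrictions that $L$ can demand; (ii) apply the finite-dimensional Cousin lemma on $\R^M$; (iii) lift the resulting partition of $\R^M$ to $\R^T$ by Cartesian multiplication with $\R^{T\setminus M}$, and check that the triples $\triple$ so obtained are $\gamma$-fine, pairwise disjoint, and exhaust $\R^T$.

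For step (i) I would use compactness: $\Rbar$ is compact, hence $\Rbar^T$ is compact in the product topology by Tychonoff's theorem. To each $x\in\Rbar^T$ associate the basic open neighbourhood cut out by the finitely many coordinates in $L(x)$ at scale $\delta(x,L(x))$ (with the usual conventions near $\pm\infty$ built into the notion of $\delta$-fineness). This family is an open cover of $\Rbar^T$; extract a finite subcover indexed by $x_1,\dots,x_k$ and set $M:=L(x_1)\cup\cdots\cup L(x_k)$, a finite subset of $T$ because each $L(x_i)$ is finite. It is also convenient, before step (ii), to replace $\delta$ by the smaller and now monotone gauge $\delta^{\flat}(x,N):=\min\{\delta(x,N'):N'\subseteq N\}$, a minimum over finitely many positive numbers, hence positive; since every $(L,\delta^{\flat})$-fine division is a fortiori $(L,\delta)$-fine, it suffices to treat the monotonized gauge.

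In step (ii) one runs the finite-dimensional Cousin lemma on $\R^M$ and refines the resulting partition just enough that every cell $I(M)$ admits a vertex whose extension by the coordinates of one of the chosen points $x_i$ is an admissible tag $x$, that is, $L(x)\subseteq M$ holds and all edges in the $L(x_i)$-directions are shorter than $\delta^{\flat}(x_i,L(x_i))$; the monotonicity of $\delta^{\flat}$ is what allows the edges in the remaining coordinates of $M$ to be controlled as well. Multiplying each such cell by $\R^{T\setminus M}$ then produces a finite family of associated $\gamma$-fine triples whose cells partition $\R^T$, and this is the desired division $D_\gamma$. Disjointness and coverage in $\R^T$ are inherited from the corresponding properties of the finite partition of $\R^M$, since all the cells involved have their common refinement of restricted coordinates contained in $M$.

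The step I expect to be the real obstacle is step (ii), where the two components of the gauge must be reconciled at once: for every cell that the Cousin subdivision forces on $\R^M$ one needs an actual tag $x\in\Rbar^T$ that is simultaneously a vertex of that cell on its restricted coordinates, admissible in the sense $L(x)\subseteq N$, and $\delta(x,N)$-fine --- while the target scale itself shifts as $N$ grows. Treating this carefully is exactly what makes the proof ``more technical'', and it is where the additional machinery of the Henstock theory over $\R^T$ enters: in particular a careful recursive refinement in which new coordinates are adjoined to $N$ only finitely often, with the compactness of $\Rbar^T$ underwriting termination of the construction, and hence the finiteness of the division $D_\gamma$.
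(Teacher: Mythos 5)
There is a genuine gap, and it sits exactly where you suspect it does. (Note first that the paper offers no proof of its own to compare against --- it defers entirely to Theorem 4 of Chapter 4 of Muldowney's book --- so your attempt must stand on its own.) The fatal step is the passage from the Tychonoff subcover to the ``master'' set $M=L(x_1)\cup\cdots\cup L(x_k)$ and thence to the tags of the eventual division. The component $L:\Rbar^T\to\N(T)$ of the gauge is an \emph{arbitrary} function: it is not continuous, not cylindrical, and may depend on all coordinates of its argument at once. Compactness therefore controls $L$ only at the finitely many sample points $x_1,\dots,x_k$; it says nothing about $L$ at the points you must actually use as tags, namely vertices $v$ of cells of a partition of $\R^M$ extended somehow to all of $\Rbar^T$. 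Your proposed extension --- splicing $v$ on the $M$-coordinates with the values of some $x_i$ off $M$ --- produces a new point $x$ at which $L(x)$ can be anything whatsoever; in particular $L(x)\subseteq M$ can fail for every admissible tag of every cell, however the partition of $\R^M$ is refined, since refining only creates new vertices at which $L$ is again unconstrained. Enlarging $M$ to absorb the offending coordinates changes the cell structure, hence the vertices, hence the values of $L$ at them, and nothing in your argument shows this loop closes after finitely many rounds; the appeal to ``compactness of $\Rbar^T$ underwriting termination'' is empty precisely because compactness interacts with a discontinuous $L$ only through the finitely many sampled values.

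What your construction does prove is the theorem under the additional hypothesis that $L$ is locally constant (equivalently, continuous into the discrete space $\N(T)$): then compactness forces $L$ to take only finitely many values, the set $M=\bigcup \mathrm{range}(L)$ is finite and satisfies $L(x)\subseteq M$ for \emph{every} $x$, and a single application of the finite-dimensional Cousin lemma on $\R^M$ with the gauge $v\mapsto\delta^{\flat}(\tilde v,M)$, for a fixed extension rule $v\mapsto\tilde v$, finishes the argument. (The monotonization $\delta^{\flat}$ is a legitimate and useful reduction, and the lifting in step (iii) is unobjectionable once the tags are secured.) For general gauges, however, the existence of $\gamma$-fine divisions of $\R^T$ cannot be reduced to a one-shot application of Cousin on a single $\R^M$: one must allow different cells of the division to carry different restricted sets $N$, and one must break the circularity tag $\to L(\mathrm{tag})\to N\to$ cell shape $\to$ tag by an explicit construction. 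That construction is the actual content of Muldowney's Theorem 4 and is the part that remains missing from your proposal.
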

\begin{proof}
 	See Theorem 4 in chapter 4 in \cite{Mul12}.
\end{proof}

\subsection{2.3 Elements of Henstock - Probability Theory}
 \begin{defn}
	A real- or complex-valued function $F$ defined on the figures $\mathbf{E}(\R^T)$ of $\R^T$ is an additive cell function if $F$ is finitely additive on disjoint figures. An additive cell function $F$ is a probability distribution function if $F(\R^T)=1$.
\end{defn}
In classical probability theory a probability distribution function is also assumed to be non-negative. But for the case of the Feynman path integral, we drop this requirement and allow the distribution functions to be negative or complex-valued. This assumption makes our approach different from the classical one, when probability distributions are non-negative and correspond to the well-defined measures. Since we already know, that the classical approach does not guarantee satisfactory results for Feynman's theory, our method based on including non-positive distribution functions, is justified. The motivation for our approach comes from a simple experiment with a beam of photons, and our argument lies at the foundation of the wave-particle duality; i.e., the idea that elementary particles (such as photons-light quanta) behave simultaneously as particles and as waves.

\indent The experiment goes as follows: a laser beam is directed towards a particle detector-device. Using sensors we can measure a number of light particles that reach the detector with a certain probability. Since the particles also possess the wave properties, depending on their phase, the interaction between photons can result in cancellation, or intensification, of the beam. Thus a probability distribution function that measures the number of particles must then necessarily take on negative as well as positive values. This explains why, in quantum mechanics, it is reasonable to have a probability distribution function which attains complex values.
\begin{defn}
	If $F$ is a probability distribution function on $\R^T$ and if there exists a function $f(x(N))$ such that for each $I\in \I(\R^T)$ 
		$$F(I)=\int_I f(x(N))|I[N]|$$
	then $f$ is a density function for $F$.
Here by $|I[N]|=|I_1\times \dots \times I_n\times \R^{T\backslash N}|$ we understand the finite product
$$|I[N]|=|I_1|\cdot|I_2|\dots|I_n|.$$
\end{defn}
\vspace*{12pt}

\section{3 Local Solution to the Schr{\"o}dinger Equation}
In his work Muldowney considers a general version of the Schr{\"o}dinger equation in one-dimension of the form
$$
\frac{\partial \psi(x,t)}{\partial t}+\frac{1}{4c}\frac{\partial^2\psi(x,t)}{\partial x^2}+cV(x,t)\psi(x,t)=0.
$$
Making the parameter $c$ to be $-\frac{i}{2}$ and the potential to depend only on space variable $V(x,t)=V(x)$, we arrive to the Schr{\"o}dinger equation in commonly used form with mass parameter taken to be 1 and $\hbar=1$:
\begin{equation}
\frac{\partial \psi(x,t)}{\partial t}=i\left[\frac{1}{2}\frac{\partial^2}{\partial x^2}-\frac{1}{2}V(x)\right]\psi(x,t). \label{eq:Schrod_gen}
\end{equation} 
Below is the statement of the theorem as it appears in the work by Muldowney with the only change that the parameter $c$ is taken to be $-{i\over 2}$. 
\begin{theorem} (Muldowney, 2012)\label{thm:solShrodingerEq}
	Let $x\in\R$ and $t>0$ be given. Suppose the function $V(y)$ is continuous at $y=x$. If
	$$\psi(x,t)=E^*_{xt}[\U(x_{T^-})]$$
exists in a neighborhood of $(x,t)$ then $\psi(x,t)$ satisfies the  Schr{\"o}dinger equation
	\begin{equation}
	\frac{\partial \psi(x,t)}{\partial t}=i\left[\frac{1}{2}\frac{\partial^2}{\partial x^2}-\frac{1}{2}V(x)\right]\psi(x,t). \label{eq:ShrodingerEq}
	\end{equation}
\end{theorem}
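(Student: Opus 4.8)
The plan is to derive (\ref{eq:ShrodingerEq}) from a short-time analysis of the path-space functional, in the spirit of the Chapman--Kolmogorov / Feynman--Kac argument, but carried out inside Muldowney's Henstock integral over $\R^T$. Fix $t>0$ and a small $\varepsilon>0$ with $t-\varepsilon>0$, and split $T=(0,t]$ as $(0,\varepsilon]\cup(\varepsilon,t]$. The structural ingredient is a consistency (``Fubini''-type) property of the functional $E^*$ on cylinder-type integrands: the path average over $(0,t]$ with terminal data $\U(x_{T^-})$ factors as the average over the short window $(0,\varepsilon]$ of the function $y\mapsto\psi(y,t-\varepsilon)$, where $\psi(y,t-\varepsilon)$ is the path average over a window of length $t-\varepsilon$ started at $y$ (here one uses that the construction is time-homogeneous, since $V$ is a function of $x$ alone). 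Concretely this reads $\psi(x,t)=\bigl(K^{\varepsilon}_V\,\psi(\cdot,t-\varepsilon)\bigr)(x)$, where $K^{\varepsilon}_V$ is the short-time propagator over $(0,\varepsilon]$ built from the free operator $K^{\varepsilon}$ of (\ref{eq:DefOpKt}) dressed by the potential operator $M^{\varepsilon}$ of (\ref{eq:DefOpMt}) --- essentially $K^{\varepsilon}_V=K^{\varepsilon}M^{\varepsilon}$, the one-step Trotter block of (\ref{eq:Trotter_solution}). The hypothesis that $\psi$ exists on a neighborhood of $(x,t)$ is exactly what makes both sides meaningful and $\psi(\cdot,t-\varepsilon)$ an admissible integrand for $K^{\varepsilon}_V$.

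Next I would expand $K^{\varepsilon}_V$ to first order in $\varepsilon$. The free part is the Fresnel convolution $(K^{\varepsilon}\phi)(x)=\bigl(2\pi i\varepsilon\bigr)^{-1/2}\int_{\R}e^{\frac{i}{2\varepsilon}|x-y|^2}\phi(y)\,dy$, whose stationary-phase (Fresnel) expansion gives $(K^{\varepsilon}\phi)(x)=\phi(x)+\tfrac{i\varepsilon}{2}\phi''(x)+o(\varepsilon)$, uniformly on a neighborhood of $x$, provided $\phi$ is $C^2$ there with controlled growth. The potential part gives $(M^{\varepsilon}\phi)(x)=e^{-\frac{i}{2}\varepsilon V(x)}\phi(x)=\phi(x)-\tfrac{i\varepsilon}{2}V(x)\phi(x)+o(\varepsilon)$, and here continuity of $V$ at $x$ is precisely what guarantees that along paths pinned near $x$ over the short window the exact potential factor $\exp\bigl(-\tfrac{i}{2}\int_{0}^{\varepsilon}V(x(s))\,ds\bigr)$ is $1-\tfrac{i\varepsilon}{2}V(x)+o(\varepsilon)$. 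Combining the two, $\bigl(K^{\varepsilon}_V\phi\bigr)(x)=\phi(x)+\tfrac{i\varepsilon}{2}\bigl[\phi''(x)-V(x)\phi(x)\bigr]+o(\varepsilon)$. The $C^2$ control needed here is not an extra assumption: it is produced by the smoothing in the Fresnel kernel, so that $\psi(\cdot,s)=K^{\varepsilon}_V\psi(\cdot,s-\varepsilon)$ is automatically smooth in $x$ near $x$, with the relevant bounds uniform for $s$ near $t$.

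Applying the expansion to $\phi=\psi(\cdot,t-\varepsilon)$ and subtracting, the decomposition of the first step gives $\psi(x,t)-\psi(x,t-\varepsilon)=\tfrac{i\varepsilon}{2}\bigl[\partial_x^2-V(x)\bigr]\psi(x,t-\varepsilon)+o(\varepsilon)$; dividing by $\varepsilon$ and letting $\varepsilon\downarrow 0$ yields $\partial_t\psi(x,t)=i\bigl[\tfrac12\partial_x^2-\tfrac12 V(x)\bigr]\psi(x,t)$, which is (\ref{eq:ShrodingerEq}). Running the same computation with $\psi(x,t+\varepsilon)=K^{\varepsilon}_V\psi(\cdot,t)(x)$ produces the right $t$-derivative with the same value, so $\psi$ is genuinely differentiable in $t$ at $(x,t)$ and the equation holds there.

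I expect the main obstacle to be making the Fresnel expansion of the second step rigorous within the \emph{non-absolute} Henstock integral: because $e^{\frac{i}{2\varepsilon}|x-y|^2}$ is only conditionally integrable, one cannot Taylor-expand $\phi$ under the integral sign and dominate the remainder, and one must instead use genuine oscillatory-integral control --- integration by parts against the gauge, or Muldowney's established asymptotics for the Fresnel integrals --- to obtain an $O(\varepsilon^2)$ remainder uniformly on a neighborhood of $x$. A secondary difficulty is the time-splitting of the first step at the level of $E^*$, i.e.\ the Henstock analogue of the Chapman--Kolmogorov identity for cylinder functionals; this is where the local hypothesis ``$\psi$ exists in a neighborhood of $(x,t)$'' really does the work, ensuring the intermediate integrand $\psi(\cdot,t-\varepsilon)$ is itself $E^*$-integrable against the short-time kernel, and it is also the reason the conclusion is only local in space-time.
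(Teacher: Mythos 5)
The paper does not actually reprove this statement: it is quoted verbatim (up to the choice $c=-\tfrac{i}{2}$) from Muldowney's book, where the argument is carried out with the Henstock machinery in $\R^T$. Your plan is in the right spirit --- peel off a short time slice, expand the one-step Fresnel propagator, pass to the limit --- but as written it defers exactly the two steps that constitute the content of the theorem, and it contains two concrete errors. First, the splitting is at the wrong end of the interval. The marginal expectation $E^*_{xt}$ is over paths pinned at $x(0)=0$ and $x(t)=x$; the Schr\"odinger operator acts in the terminal variable $x$, and continuity of $V$ is assumed at $x$. So the slice that must be peeled off is the last one, $(t-\varepsilon,t]$, adjacent to the pinned endpoint $x$. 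Your decomposition removes the first window $(0,\varepsilon]$, calls the remaining piece ``$\psi(y,t-\varepsilon)$ started at $y$'' (which is not the object $\psi$ defined in the theorem --- that one always starts at $0$), and then writes $\psi(x,t)=\bigl(K^{\varepsilon}_V\psi(\cdot,t-\varepsilon)\bigr)(x)$, a formula which in fact corresponds to the back-splitting. The formula you subsequently compute with is the right one, but the justification offered for it is not; and with the front-splitting the potential over the short window is sampled near the starting point $0$, where no continuity of $V$ is assumed.

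Second, the regularity claim is false: the Fresnel kernel is a unitary oscillatory kernel, not a smoothing one (unlike the heat kernel), so $\psi(\cdot,s)=K^{\varepsilon}_V\psi(\cdot,s-\varepsilon)$ does not automatically become $C^2$ with uniform bounds; the existence of $\partial_x^2\psi$ must be extracted from the structure of the Fresnel integrand itself, by differentiating the complex density under the Henstock integral and controlling the resulting non-absolutely-integrable expressions with gauges. Likewise, the assertion that paths are ``pinned near $x$ over the short window'' so that $\exp\bigl(-\tfrac{i}{2}\int V\bigr)$ may be replaced by $1-\tfrac{i\varepsilon}{2}V(x)+o(\varepsilon)$ is not available: the intermediate point $y$ ranges over all of $\R$, and the contribution of distant $y$ is suppressed only by oscillation, which cannot be dominated --- precisely the non-absolute-integrability obstacle you yourself flag at the end. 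Together with the unproved Chapman--Kolmogorov (marginal-consistency) identity for $E^*$ on cylinder integrands, these are not routine technicalities to be filled in later; they \emph{are} the theorem. As a proof the proposal is therefore incomplete; as a roadmap it is reasonable once the peeled slice is moved to the terminal end and the two deferred lemmas are supplied from the Henstock theory.
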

Here $E^*_{xt}[\U(x_{T^-})]$ is the marginal expectation given by
\begin{equation}
E^*_{xt}[\U(x_{T^-})]  = \int_{\R^{T^-}}\U(x_{T^-})G(I[N^-]). \label{eq:MargExp}
\end{equation}
In this work, $x_T$ represents a possible path of a particle; and is a real-valued function on time interval $x_T:[0,t]\rightarrow \mathbb{R}$. By definition of the marginal expectation the paths that are considered in the integral 
(\ref{eq:MargExp}) are conditioned  to have the right end-points fixed. The notation $x_{T^-}$ is used to distinguish the path with a fixed right end from a path with a free right end. Thus the domain of integration for $E^*_{xt}[\U(x_{T^-})]$  is given by
$$ \R^{T^-}=\{x(t)~|~x:[0,t]\rightarrow \R, x(0)=0, x(t)=x \} $$
and the integrands have the following definitions
\begin{eqnarray}
     \U(x_{T^-})&=& \left\{\begin{array}{rcl}  \exp(-\frac{i}{2}\int_{T^-}V(x(t))dt), &x_{T} \textrm{ is continuous}\\
                  0, & \textrm{otherwise}. \nonumber
	\end{array} \right.\\ \nonumber
		G(J[N^-])&:=& \int_{J(N^-)}g(x(N^-))|J(N^-)| \nonumber\\
							&=&   \left. \int_{J(N^-)}g(x(N^-))\right\vert_{x(0)=0, x(t)=x} dx_1\dots dx_{n-1} \nonumber\\
							&= & \prod_{j=1}^{n-1}  \left. \left(\sqrt{\frac{-i}{2\pi(t_j-t_{j-1})}}\right) \int_{I_j} e^{\frac{i}{2}\frac{(x_j-x_{j-1})^2}{(t_j-t_{j-1})}}\right\vert_{x(0)=0, x(t)=x} dx_j. \label{eq:fnc_G}
\end{eqnarray}
\indent The function $\exp\left(\frac{i}{2}\frac{(x_j-x_{j-1})^2}{(t_j-t_{j-1})}\right)$ under the integral sign is not Lebesgue integrable.  However, we can use a closed contour in the complex plane to compute the improper Riemann integral for this function (see also  \cite{MR2362774,MR2038334, MR2005680, MR1703428}). In one-dimension, the improper Riemann integral is equivalent to the Henstock integral \cite{Mul12}. Thus the Henstock integral of the Fresnel integral with the simple expression in the exponent can be evaluated as follows
$$\int_\R e^{\frac{i}{2}x^2}dx_1=\sqrt{\frac{2\pi}{-i}}.$$
The method with a closed contour in the complex plane for one-dimensional case lies in the core of the proof for the result stated in the theorem below. Using the appropriate change of variables, the Riemann sum for the infinite dimensional integral can be computed as a product of one-dimensional integrals. We refer interested reader to \cite{Mul12} for details.
\begin{theorem}
The Fresnel infinite-dimensional integrand is integrable in $\R^T$ and
$$\int_{\R^T} G (I[N]) =1.$$
\end{theorem}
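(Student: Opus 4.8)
The plan is to reduce the infinite-dimensional statement to the one-dimensional Fresnel integral, upgrade it to a Chapman--Kolmogorov (semigroup) identity for the complex heat kernel, and then use consistency of the resulting cell function to show that every $\gamma$-fine Riemann sum of $G(I[N])$ equals $1$ exactly, so that integrability and the value $1$ are obtained simultaneously. Write $p_{s}(u)=\sqrt{\tfrac{-i}{2\pi s}}\,e^{\frac{i}{2}u^{2}/s}$ for $s>0$; by the product structure exhibited in (\ref{eq:fnc_G}), $G(I[N])$ is the iterated (improper Riemann $=$ Henstock) integral over $I(N)$ of the product of the kernels $p_{t_{j}-t_{j-1}}$ along the consecutive time points of $N$.

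First I would record the one-dimensional normalization $\int_{\R}p_{s}(u)\,du=1$. This follows from $\int_{\R}e^{\frac{i}{2}u^{2}/s}\,du=\sqrt{2\pi s/(-i)}$, the standard contour computation: rotate the ray of integration by $\pi/4$ so that the integrand becomes a genuine decaying Gaussian, show the circular-arc contribution vanishes in the limit, and compare with $\int_{\R}e^{-r^{2}/2s}\,dr=\sqrt{2\pi s}$. As noted in the text, in one dimension this improper Riemann integral coincides with the Henstock integral, and the same contour estimate gives finite additivity of the improper integral of $p_{s}$ over subintervals together with the tail bound $\bigl|\int_{b}^{\infty}e^{\frac{i}{2}u^{2}/s}\,du\bigr|=O(1/|b|)$. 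Next I would prove the semigroup identity $\int_{\R}p_{s_{1}}(a-u)\,p_{s_{2}}(u-b)\,du=p_{s_{1}+s_{2}}(a-b)$ by completing the square in $u$ and invoking the previous step.

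The crucial intermediate claim is then that $G$ extends to a finitely additive cell function on $\mathbf{I}(\R^{T})$ which is \emph{consistent} in the time variables: if $N\subseteq N'$ in $\N(T)$ and the additional coordinates of the refined cell carry the full interval $(-\infty,\infty)$, then $G(I'[N'])=G(I[N])$, and in particular $G(\R^{T})=1$. Granting this, the integrability argument is short. Given an arbitrary gauge $\gamma=(L,\delta)$, take any $\gamma$-fine division $D$ of $\R^{T}$ (which exists by the division-existence theorem quoted above), let $N^{\ast}$ be the union of the finitely many restricting sets occurring in $D$, and rewrite every cell of $D$ with restricting set $N^{\ast}$, the new coordinates carrying $(-\infty,\infty)$. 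By consistency this changes no value of $G$; the rewritten cells all share $N^{\ast}$, hence project to an honest partition of $\R^{N^{\ast}}$ into (possibly unbounded) rectangles. Finite additivity of the iterated Fresnel integral over that rectangular partition, followed by integrating the $N^{\ast}$-coordinates out one at a time via the two identities above, collapses the Riemann sum $(D)\sum G(I[N])$ to $G(\R^{N^{\ast}})=1$. Since this holds for every $\gamma$-fine division and every $\varepsilon>0$, $G$ is Henstock-integrable on $\R^{T}$ with integral $1$. (The component $L$ is unconstrained here; it becomes relevant only for the full integrand $\U\,G$, whose continuity factor genuinely depends on infinitely many coordinates.)

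I expect the real obstacle to be the crucial intermediate claim: establishing that the iterated Fresnel integrals defining $G$ are well-defined over \emph{unbounded} rectangles, that the resulting cell function is finitely additive, and that an interior time-coordinate may legitimately be integrated out against its two neighbours first so that Chapman--Kolmogorov applies; all of this for a kernel that is only conditionally integrable, where Fubini's theorem and absolute tail estimates are unavailable. The oscillation has to be exploited through the estimate $\bigl|\int_{b}^{\infty}e^{\frac{i}{2}u^{2}/s}\,du\bigr|=O(1/|b|)$ and its analogue near $-\infty$, uniformly over the arbitrarily many coordinate blocks of a division. This is precisely the Fubini-type and consistency theory for the Henstock integral developed by Muldowney in \cite{Mul12}; once it is in hand, the assembly above is routine, and the preceding discussion of the one-dimensional Fresnel integral is exactly what feeds into it.
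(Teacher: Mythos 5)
Your proposal is correct and follows essentially the same route as the paper, which itself only sketches the argument (the one-dimensional Fresnel normalization via contour rotation, the reduction of the infinite-dimensional Riemann sum to products of one-dimensional integrals) and defers all details to Muldowney \cite{Mul12}. You correctly identify that the genuine technical content is the finite additivity and consistency of $G$ as a cell function for a merely conditionally integrable kernel, which is exactly the part the paper outsources.
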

The function $G$ (\ref{eq:fnc_G}) is additive and has integral 1 over $\R^T$, thus it is the complex ``transition probability."  It represents the probability of a path to go through the window $I_1$ at time $t_1$, $I_2$ at time $t_2$ and so on. The windows $I_1,\dots, I_{n-1}$ are specified by the cylindrical interval $J[N^-]=J(N^-)\times \R^{T\backslash N^{-}}=I_1\times \dots\times I_{n-1}\times \R^{T\backslash N^{-}}$. Since all the path in the integral (\ref{eq:MargExp}) have the right end fixed, there is no integration over the last instance $t$. We use the notation $N^-=\{t_1,\dots,t_{n-1}\}$ for the list of points in time excluding the last one. 

\indent Note, that $G$ is not probability function in the classical sense. Based on the rules of quantum mechanics, we state that it is reasonable to introduce ``probability distribution functions" which take on complex values.  Also we would like to remark, that the solution to the Schr{\"o}dinger equation given by theorem \ref{thm:solShrodingerEq} possesses a  property of a Green function:
$$\lim_{t\rightarrow 0} E^*_{xt}[\U(x_{T^-})]=\delta(x).$$
Since the paths  $x_T=x:[0,t]\rightarrow \mathbb{R} $ have fixed both left and right ends, it follows from the definition of the marginal expectation $E^*_{xt}[\U(x_{T^-})]$ that if $t$ tends to zero, then  $E^*_{xt}[\U(x_{T^-})]$ is zero unless $x=0$. \\
\indent Also, the solution $ E^*_{xt}[\U(x_{T^-})]$ is purely local, it is defined only in a space-time neighborhood of $(x,t)$. In the next section we extend it uniquely to a global solution and give it representation in an operator form. First, let us introduce some notation: denote by $K$ a compact set. Let $\D_K$ be a subset of $C_c^\infty(\R)$-functions that have their support in $K$:
	$\D_K=\{\phi\in C_c^\infty(\R)| \textrm{supp}~ \phi\subseteq K\}$; let $\D=\cup_K \D_K$.
\begin{corollary} \label{cor:corollary1}
Let $K\subset\R$ be a compact set, $0<t<\tau$ be fixed, $f(x)\in\D_K$, $V(y)$ be continuous at $y=x\in K$, then there exist a neighborhood of $(x,t) \in K\times(0,\tau]$ where the function
\begin{equation}
\psi^{loc}(x,t) = \int_{\R}E^*_{xt}[\U(x_{T^-})]f(y)dy \label{eq:henstock_sol}
\end{equation}
is a solution to the following IVP 
\begin{align}
\frac{\partial \psi(x,t)}{\partial t}&=i\left[\frac{1}{2}\frac{\partial^2}{\partial x^2}-\frac{1}{2}V(x)\right]\psi(x,t), \label{eq:IVP_loc1}\\
	\psi(x,0) &= f(x). \quad  \label{eq:IVP_loc2}
\end{align}
\end{corollary}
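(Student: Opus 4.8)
The plan is to recognise $\psi^{loc}$ as a superposition, over the initial point $y$ of the path, of Muldowney's local solutions, and to propagate the conclusion of Theorem~\ref{thm:solShrodingerEq} through the outer $y$-integral by differentiating under the integral sign. Throughout, the kernel $E^*_{xt}[\U(x_{T^-})]$ appearing in \eqref{eq:henstock_sol} is read as the marginal expectation \eqref{eq:MargExp} over paths pinned at both ends, $x(0)=y$ and $x(t)=x$ (so it does depend on $y$, the representative $x(0)=0$ of \eqref{eq:MargExp} being the case $y=0$); write $P(x,t\,;\,y)$ for it. Since $f\in\D_K$ has $\operatorname{supp} f\subseteq K$,
$$\psi^{loc}(x,t)=\int_{\R}P(x,t\,;\,y)\,f(y)\,dy=\int_{K}P(x,t\,;\,y)\,f(y)\,dy .$$

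First I would fix the point of the statement, call it $(x_0,t_0)$, with $x_0\in K$, $0<t_0<\tau$, and $V$ continuous at $x_0$. For each $y\in K$, Theorem~\ref{thm:solShrodingerEq} (applied with initial point $y$) provides a neighborhood of $(x_0,t_0)$ on which $(x,t)\mapsto P(x,t\,;\,y)$ exists and satisfies \eqref{eq:ShrodingerEq}. The step to nail down is that one can choose a \emph{single} neighborhood $W$ of $(x_0,t_0)$, independent of $y\in K$, on which $P(\cdot,\cdot\,;\,y)$ exists and is of class $C^{2,1}$ (two $x$-derivatives, one $t$-derivative) with $\partial_t P$, $\partial_x P$, $\partial_x^2 P$ bounded on $W$ uniformly in $y\in K$. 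I would extract this from the explicit structure of $G$ in \eqref{eq:fnc_G}: in any Riemann sum approximating $P(x,t\,;\,y)$ the dependence on $(x,t)$ enters only through finitely many Gaussian factors $\sqrt{-i/2\pi(t_j-t_{j-1})}\,\exp\!\big(\tfrac{i}{2}(x_j-x_{j-1})^2/(t_j-t_{j-1})\big)$, whose $x$- and $t$-derivatives are controlled uniformly once $t$ is bounded away from $0$ and the relevant spatial variables lie in a compact set, while $|\U(x_{T^-})|\le 1$ with a $t$-modulus of continuity governed by $V$ near $K$; these are the same estimates that make the Fresnel integrand integrable with $\int_{\R^{T}}G(I[N])=1$.

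Granting this, the remaining steps are routine. By the classical theorem on differentiation under the integral over the fixed compact set $K$, for $(x,t)\in W$
$$\partial_t\psi^{loc}(x,t)=\int_{K}\partial_t P(x,t\,;\,y)\,f(y)\,dy,\qquad \partial_x^2\psi^{loc}(x,t)=\int_{K}\partial_x^2 P(x,t\,;\,y)\,f(y)\,dy ;$$
substituting the pointwise identity $\partial_t P=i\bigl[\tfrac12\partial_x^2-\tfrac12 V(x)\bigr]P$, valid on $W$ for every $y\in K$ by Theorem~\ref{thm:solShrodingerEq}, and pulling the operator $i\bigl[\tfrac12\partial_x^2-\tfrac12 V(x)\bigr]$ (which acts only in $x$) outside the $y$-integral yields \eqref{eq:IVP_loc1} for $\psi^{loc}$ on $W$. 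For the initial condition \eqref{eq:IVP_loc2} I would invoke the Green-function property noted after Theorem~\ref{thm:solShrodingerEq}, in its based form $P(x,t\,;\,y)\to\delta(x-y)$ as $t\to0^{+}$: since $\U(x_{T^-})=\exp(-\tfrac{i}{2}\int_0^t V(x(s))\,ds)\to1$, the kernel $P(x,t\,;\,y)$ is to leading order the free one-step kernel $\sqrt{1/2\pi i t}\,\exp(i|x-y|^2/2t)$, a complex approximate identity; pairing it with $f\in\D_K\subset C_c^\infty(\R)$ and using the one-dimensional Fresnel evaluation $\int_{\R}e^{i\xi^2/2}\,d\xi=\sqrt{2\pi/(-i)}$ — the mechanism behind $\int_{\R^{T}}G(I[N])=1$ — gives $\int_{\R}P(x,t\,;\,y)f(y)\,dy\to f(x)$ locally uniformly in $x$, hence $\psi^{loc}(x,0^{+})=f(x)$.

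I expect the main obstacle to be the uniformity asserted in the second paragraph: Theorem~\ref{thm:solShrodingerEq} supplies, for each $y$, only \emph{some} neighborhood on which $P(\cdot,\cdot\,;\,y)$ is a solution, not one with $C^{2,1}$ bounds uniform in $y$ over the compact set $K$, and closing that gap requires revisiting the construction of $E^*_{xt}$ in \cite{Mul12} and tracking how the Fresnel/contour estimates depend on the initial point $y$ and on $V|_{K}$. A secondary point demanding the same care is the interchange of $\lim_{t\to0^{+}}$ with the oscillatory $y$-integral in the last step, which must be justified by the stationary-phase/contour argument underlying the Fresnel evaluation rather than by absolute convergence.
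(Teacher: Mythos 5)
Your proposal follows essentially the same route as the paper: both proofs obtain the PDE by interchanging the operator $i\bigl[\tfrac12\partial_x^2-\tfrac12 V\bigr]$ (equivalently $\partial_t$) with the $y$-integral and invoking Theorem~\ref{thm:solShrodingerEq} pointwise in $y$, and both obtain the initial condition from the delta-function limit $E^*_{xt}\to\delta(x-y)$ as $t\to0^{+}$. The only difference is that you explicitly flag the uniformity-in-$y$ and interchange-of-limits issues that the paper passes over with the remark that $\psi^{loc}$ is locally integrable, which is a point in your favor rather than a divergence of method.
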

\begin{proof}
First, we show that $\psi^{loc}(x,t)$ satisfies the initial condition (\ref{eq:IVP_loc2}):
\begin{align}
\left.\psi^{loc}(x,t)\right\vert_{t=0} &=\left. \left(\int_{\R}E^*_{xt}[\U(y_{T^-})]\right\vert_{t=0}f(y)dy\right)   \nonumber \\ 
			&  =\lim_{t\rightarrow 0}\left( \int_{\R} E^*_{xt}[\U(y_{T^-})] f(y)dy\right).   \nonumber 
\end{align}
As $t$ tends to zero  the condition on the integrand becomes $x(t)\rightarrow x(0)$, thus the marginal expectation $E^*_{xt}$ yields one if and only if x=y and otherwise it yields zero, that means
\begin{align}
\left.\psi^{loc}(x,t)\right\vert_{t=0} &=\int_{\R}\left. E^*_{xt}[\U(y_{T^-})] \right\vert_{t=0} f(y)dy \nonumber \\ 
			& =\int_{\R}\delta(x-y) f(y)dy  \nonumber \\ 
			& = f(x).  \nonumber
\end{align}
Thus the initial condition is satisfied. Next we prove that $\psi^{loc}(x,t)$ solves the equation (\ref{eq:IVP_loc1}). We use the fact that $\psi^{loc}(x,t)$ is a locally integrable function. We have
\begin{align}
i\left(\frac{1}{2}\frac{\partial^2}{\partial x^2}-\frac{1}{2}V\right)\psi^{loc}(x,t) &=i\left(\frac{1}{2}\frac{\partial^2}{\partial x^2}-\frac{1}{2}V\right)\int_{\R}E^*_{xt}[\U(x_{T^-})]f(y)dy \nonumber \\
  &= \int_{\R}i\left(\frac{1}{2}\frac{\partial^2}{\partial x^2}-\frac{1}{2}V\right)E^*_{xt}[\U(x_{T^-})]f(y)dy \nonumber \\
&= \int_{\R} \frac{\partial}{\partial t}E^*_{xt}[\U(x_{T^-})]f(y)dy \quad \textrm{by theorem  \ref{thm:solShrodingerEq} }\nonumber\\
&= \frac{\partial}{\partial t}\int_{\R}E^*_{xt}[\U(x_{T^-})]f(y)dy \nonumber \\
& = \frac{\partial}{\partial t}\psi^{loc}(x,t).  \nonumber
\end{align}
\end{proof}
\section{4 Global solution to the Schr{\"o}dinger Equation}
\subsection{4.1 Class of potentials $\SA(\R)$}

\indent Our goal is to compare the solution for the Schr{\"o}dinger equation via Henstock (\ref{eq:henstock_sol}) to the solution via Trotter's theorem  in the form of a one-parameter unitary group (\ref{eq:Trotter_solution}). Since the convergence of the limit in Trotter's formula  (\ref{eq:Trotter_solution}) requires the Hamiltonian to be essentially selfadjoint, we limit ourselves to considering a special class of functions serving as potentials. 
\begin{defn}
  Let $V:\R\rightarrow\R$ be a real measurable function. We say that $V$ belongs to the class $\SA(\R)$ if and only if the operator $\left(-\Delta+V\right)$ is essentially selfadjoint on $L^2(\R)$.
\end{defn}

   \indent The question of giving precise mathematical conditions for essential selfadjointness of physical Schr{\"o}dinger operators $\left(-\Delta+V\right)$  has a long history going back to the two pioneering papers of T. Kato \cite*{Kat511, Kat512} in 1951. In a formulation given by J. von Neumann, the question is about the distinction between what in the math physics literature is called formally selfadjoint (alias, Hermitian, or symmetric), as opposed to selfadjoint. Typically the initial dense domain where the given Schr{\"o}dinger operator $\left(-\Delta+V\right)$ is naturally defined needs to be completed in a graph closure, yielding thus the closure of  $\left(-\Delta+V\right)$, as a closed Hermitian operator. If the closure is selfadjoint, we say that $\left(-\Delta+V\right)$ is essentially selfadjoint. In general settings, it may not be. But it is the latter selfadjointness condition which allows us to write a spectral resolution for $\left(-\Delta+V\right)$, and also to create effective path space representations for the associated Schr{{\"o}dinger equation. \\
    \indent   By J. von Neumann's theory, we know that $\left(-\Delta+V\right)$ is essentially selfadjoint if and only if it has indices (0, 0). While the von Neumann indices have a physical interpretation, to decide essential selfadjointness is always a subtle question, both from the standpoint of mathematics and physics. We are interested in effective path space representations for the Schr{\"o}dinger equation. But since we rely on first having a resolution to the essential selfadjointness question, we include here the following citations \cite*{Kat511,Kat512,Kat58,Rad78,Kat52,Kal98,Don97}. Each of these papers, starting with Kato 1951, offers useful insight into the question. For concrete examples we give the following conditions on potential function $V$ that guarantee that it falls into the class $\SA(\R) $:
\begin{enumerate}
\item $V(x)>0$ and $\lim_{|x|\rightarrow \infty}V(x)=\infty$ \cite{Ree75}.
\item $-\Delta+V$ has a dense set of analytic vectors \cite{Nel59}.
\item  For a system of $N+1$ particles in 3-dimensional space denote the position vector of $i$th particle by $r_i=(x_i,y_i,z_i)$, then $V$ can be expressed as 
	$$V(r_1, \cdots, r_N) = V'(r_1,\cdots,t_N)+\sum_{i=1}^NV_{0,i}(r_i)+\sum_{i<j}^{1,N}V_{ij}(r_i-r_j),$$
where $V'$ is bounded in the whole configuration space, and $V_{i,j} (0\leq i< j\leq s)$, defined in 3-dimensional space, are locally square-integrable and bounded at infinity (see \cite{Kat511}). In other words, there exist two constants $C$ and $R$ such that
\begin{eqnarray}
	|V'(r_1,\cdots, r_N)|&\leq&C, \nonumber\\
	\int_{r\leq R} |V_{ij}(x,y,z)|^2dxdydz &\leq& C^2, \quad (0\leq i< j\leq N)\nonumber \\
 	|V_{ij}(x,y,z)|&\leq&C \quad (r>R, r=(x^2+y^2+z^2)^{1/2},  0\leq i< j\leq N).\nonumber 
\end{eqnarray}
\end{enumerate}

\indent An example of the potential for which $\left(-\Delta+V\right)$ is not essentially selfadjoint on $L^2(\R)$ is $V(x)=-x^4$ (see \cite{MR0383469}). The reason is that the classical motion reaches $\pm\infty$ in finite time: fix an energy $E>0$ then $\left(\frac{dx}{dt}\right)^2-x^4=E$ yields 
$$t_\infty=\int_0^\infty \frac{dx}{\sqrt{E+x^4}}<\infty.$$

\subsection{4.2 Operator $\mathcal{K}_t(x,f(\cdot))$}
\indent Let $\phi\in\D$, $\{O_i\}$ be a finite open cover of supp~$\phi$ and $\beta_i$ be a partition of unity subordinate to $\{O_i\}$, i.e. the following conditions are satisfied: (1)~$\beta_i\geq 0$, (2)~$\beta_i \in C^\infty_c(\R)$, (3)~supp~$\beta_i\subset O_i$, (4)~$\sum \beta_i(x)=1$ for all $x\in \textrm{supp}~\beta_i$. Then we can write $\phi=\sum \beta_i\phi$ and define a linear map $\mathcal{K}_t(x,\cdot):\D_K\rightarrow\D'$ as follows $$\mathcal{K}_t(x,f(\cdot)):= \int_{\R}E^*_{xt}[\U(x_{T^-})]f(y)dy$$ with an action on a test function $\phi(x)\in\D$:
\begin{align}
<\mathcal{K}_t(x,f(\cdot)),\phi(x)> &=\sum<\mathcal{K}_t(x,f(\cdot)),\beta_i\phi(x)>.  \label{eq:_ext_onD_K}
\end{align}
In the context of the corollary \ref{cor:corollary1}, $\mathcal{K}_t(x,f(\cdot))$ exists only locally in $x$, so to define its  action as a distribution on $\D$, we use the partition of unity. Representing  $\phi=\sum \beta_i\phi$ allows us to pair $\mathcal{K}_t(x,f(\cdot))$ with $\beta_i\phi(x)$ that are non-zero in a sufficiently small neighborhood that can be chosen according to the corollary \ref{cor:corollary1}. Thus using linearity we define $<\mathcal{K}_t(x,f(\cdot)),\phi(x)>$ as a sum of pairing of $\mathcal{K}_t(x,f(\cdot))$ with $\beta_i\phi(x)$, so the expression (\ref{eq:_ext_onD_K}) makes sense.\\
\indent Next corollary extends the domain of $\mathcal{K}_t(x,f(\cdot))$ to $L^2(\R)$.
\begin{corollary} \label{cor:corollary2}
 Under the conditions in the statement of the corollary \ref{cor:corollary1} with $f(x)\in L^2(\R)$ the function
\begin{equation}
\psi^{loc}(x,t) = \mathcal{K}_t(x,f(\cdot)) \label{eq:henstock_sol_L2}
\end{equation}
is a local solution to the IVP  (\ref{eq:IVP_loc1})-(\ref{eq:IVP_loc2}).

\end{corollary}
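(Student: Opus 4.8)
The plan is to obtain the result by approximation: Corollary~\ref{cor:corollary1} together with the distributional extension~(\ref{eq:_ext_onD_K}) already gives the statement for the dense subspace $\D\subset L^2(\R)$, and I would transport it to all of $f\in L^2(\R)$ by a continuity argument, working throughout in the sense of distributions on the space-time neighborhood of $(x,t)$ furnished by Corollary~\ref{cor:corollary1}. Nothing new about the Henstock integral is needed; the content is soft functional analysis plus one a priori bound.

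First I would fix a compact set $K\subset\R$, a horizon $0<t<\tau$, a point $x\in K$ at which $V$ is continuous, and $f\in L^2(\R)$, and choose $f_n\in\D$ with $f_n\to f$ in $L^2(\R)$ (mollify and truncate). For each $n$, Corollary~\ref{cor:corollary1} and~(\ref{eq:_ext_onD_K}) say that $\psi_n^{loc}(\cdot,t)=\mathcal{K}_t(x,f_n(\cdot))$ is defined on a neighborhood $\mathcal{O}$ of $(x,t)$ (a common such $\mathcal{O}$ can be taken for all $n$, since it depends only on $K$, $\tau$ and the continuity point of $V$), solves~(\ref{eq:IVP_loc1}) there in $\D'$, and realizes~(\ref{eq:IVP_loc2}) in the sense $\lim_{s\to0}\psi_n^{loc}(\cdot,s)=f_n$.

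The key step is the a priori estimate that the linear map $g\mapsto\mathcal{K}_t(x,g(\cdot))$ is bounded on $L^2(\R)$ (indeed, one expects it to be an isometry), uniformly for $t\in(0,\tau]$; equivalently, $g\mapsto\mathcal{K}_t(x,g(\cdot))$ is continuous from $L^2(\R)$ into $\D'$, uniformly in $t$. Heuristically $\mathcal{K}_t$ is assembled from the Fresnel kernels in $G$ of~(\ref{eq:fnc_G}) and the unimodular factor $\U$, and in the Trotter picture it is a limit of products of the unitaries $K^{t/n}$ and $M^{t/n}$ of~(\ref{eq:DefOpKt})--(\ref{eq:DefOpMt}), so $\|\mathcal{K}_t(\cdot,g)\|_{L^2}=\|g\|_{L^2}$. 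Granting this, $(\psi_n^{loc}(\cdot,t))_n$ is Cauchy in $L^2(\R)$ uniformly in $t\le\tau$; define $\psi^{loc}(\cdot,t)=\mathcal{K}_t(x,f(\cdot))$ as the limit, which is independent of the approximating sequence and agrees with~(\ref{eq:henstock_sol_L2}). Passing to the limit in the IVP is then routine: since $\psi_n^{loc}\to\psi^{loc}$ in $L^2_{\mathrm{loc}}(\mathcal{O})$, hence in $\D'(\mathcal{O})$, and $\partial_t$, $\partial_x^2$, and multiplication by the function $V$ (continuous at $x$, hence locally bounded) are continuous on $\D'(\mathcal{O})$, applying these to $\partial_t\psi_n^{loc}=i(\tfrac12\partial_x^2-\tfrac12 V)\psi_n^{loc}$ and letting $n\to\infty$ gives~(\ref{eq:IVP_loc1}) for $\psi^{loc}$ in $\D'(\mathcal{O})$; and the $\varepsilon/3$ split $\|\psi^{loc}(\cdot,s)-f\|_2\le\|\psi^{loc}(\cdot,s)-\psi_n^{loc}(\cdot,s)\|_2+\|\psi_n^{loc}(\cdot,s)-f_n\|_2+\|f_n-f\|_2$, with the outer terms controlled uniformly in $s$ by the $L^2$-bound and by $f_n\to f$, and the middle term vanishing as $s\to0$ by Corollary~\ref{cor:corollary1}, yields $\lim_{s\to0}\psi^{loc}(\cdot,s)=f$ in $L^2$.

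The main obstacle is precisely the a priori $L^2$-bound on $g\mapsto\mathcal{K}_t(x,g)$. The Feynman kernel has modulus constant in the spatial variable, so the integral in~(\ref{eq:henstock_sol}) is only conditionally (Henstock) convergent and is never an absolutely convergent one; consequently $L^2$-continuity cannot be harvested from dominated convergence and must instead be extracted either directly from the Henstock construction of $G$ (using $\int_{\R^T}G=1$ together with an oscillatory-integral estimate) or, as we do in the following subsection, by identifying $\mathcal{K}_t$ with the Trotter product of the unitaries $K^{t/n}$ and $M^{t/n}$, whose convergence is guaranteed by $V\in\SA(\R)$. Once that bound is in hand, every remaining step above is standard.
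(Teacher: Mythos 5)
Your proposal follows essentially the same route as the paper: approximate $f\in L^2(\R)$ by compactly supported functions, apply Corollary~\ref{cor:corollary1} to each approximant, and pass to the limit. Two differences are worth recording. First, your approximation is actually the more correct one: the paper truncates to $f_n=f\big|_{[-n,n]}$, which is compactly supported but not in $\D_K$, so Corollary~\ref{cor:corollary1} does not literally apply to it; your mollify-and-truncate step repairs this. Second, and more importantly, you make explicit the analytic content that the paper compresses into the phrase ``by interchanging limits with the integral sign'': one needs a uniform-in-$t$ a priori bound (ideally an $L^2$ isometry) for $g\mapsto\K_t(x,g(\cdot))$, and this cannot come from dominated convergence because the Fresnel kernel in (\ref{eq:fnc_G}) has constant modulus and the integral is only conditionally (Henstock) convergent. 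That is exactly the gap in the paper's own proof, and your proposal does not close it either — it defers the bound to the identification of $\K_t$ with the Trotter product of the unitaries $K^{t/n}$, $M^{t/n}$. Within the paper's architecture that deferral is circular: the identification is only established in Lemma~\ref{lemma:local_match} inside the proof of Theorem~\ref{thm:main_result}, which in turn invokes Corollary~\ref{cor:corollary2}. So either the $L^2$ bound must be extracted directly from the Henstock construction of $G$ (e.g.\ from $\int_{\R^T}G=1$ plus an oscillatory-integral estimate, as you suggest in your first alternative), or the logical order of Corollary~\ref{cor:corollary2} and the Trotter comparison must be rearranged. Apart from this shared unproven step, your limit passage in $\D'$ and the $\varepsilon/3$ argument for the initial condition are standard and correct.
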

\begin{proof}
	Let $f(x)\in L^2(\R)$, then we can write $f_n(x)=f(x)\Big|_{[-n,n]}$ and
$$f(x)=\lim_{n\rightarrow \infty} f_n(x).$$
	For all $n\in \mathbb{N}$, $[-n,n]$ is compact, thus $\mathcal{K}_t(x,f_n(\cdot))$ is well defined and by corollary \ref{cor:corollary1} is a local solution to (\ref{eq:IVP_loc1})-(\ref{eq:IVP_loc2}) with the initial condition $\psi(x,0) =f_n(x)$. By interchanging limits with the integral sign in the proof of the corollary \ref{cor:corollary1}, we can show that the solution to  (\ref{eq:IVP_loc1})-(\ref{eq:IVP_loc2}) for $f(x)\in L^2(\R)$ is obtained as follows:
$$\psi^{loc}(x,t)=\mathcal{K}_t(x,f(\cdot))=\lim_{n\rightarrow \infty} \mathcal{K}_t(x,f_n(\cdot)).$$
\end{proof}

\subsection{4.3 Main result}

\begin{theorem} \label{thm:main_result} If ~$V(x)$ is a continuous function in $\SA(\R)$ then each of the local solutions to the Schr{\"o}dinger equation $\psi^{loc}(x,t)$ given by  corollary \ref{cor:corollary2} has a unique extension $\psi^{glob}(x,t)$ for all $t\in\R$ and $x\in\R$. The extension $\psi^{glob}(x,t)$ solves the following IVP with $f\in L^2(\R)$:
\begin{align}
\frac{\partial \psi^{glob}(x,t)}{\partial t}&=i\left[\frac{1}{2}\frac{\partial^2}{\partial x^2}-\frac{1}{2}V(x)\right]\psi^{glob}(x,t), \quad t\in\R, x\in\R, \label{eq:IVP_glob1}\\
	\psi^{glob}(x,0) &= f(x), \quad x\in\R. \label{eq:IVP_glob2}
\end{align}
Moreover, $\psi^{glob}(x,t)$ is represented by a unitary one-parameter strongly continuous group in $L^2(\R)$:
\begin{equation}
\psi^{glob}(x,t)=\mathcal{K}_t(x,f(\cdot)).
\end{equation}
\end{theorem}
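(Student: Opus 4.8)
The plan is to connect the Henstock-based local solution $\mathcal{K}_t(x,f(\cdot))$ with the Trotter product formula solution $U_t = \exp(it(\tfrac{1}{2}\Delta - \tfrac{1}{2}V))$ from \eqref{eq:Trotter_solution}, and then transfer the group and unitarity properties of $U_t$ to $\psi^{glob}$. First I would fix the essential selfadjointness hypothesis: since $V\in\SA(\R)$, the operator $H = -\tfrac{1}{2}\Delta + \tfrac{1}{2}V$ is essentially selfadjoint on (say) $C_c^\infty(\R)$, so its closure generates a strongly continuous one-parameter unitary group $U_t = e^{-itH}$ on $L^2(\R)$ by Stone's theorem. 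By Trotter's product formula (valid precisely because of essential selfadjointness), $U_t\psi_0 = \lim_n (K^{t/n}M^{t/n})^n\psi_0$ in the strong topology, with $K^s$, $M^s$ as in \eqref{eq:DefOpKt}--\eqref{eq:DefOpMt}. The key structural observation is that the finite-dimensional Riemann sums defining the marginal expectation $E^*_{xt}[\U(x_{T^-})]$, after the change of variables indicated following \eqref{eq:fnc_G}, are \emph{exactly} the kernels of the iterated operators $(K^{t/n}M^{t/n})^n$: the product of one-dimensional Fresnel integrals $\prod_j \sqrt{-i/2\pi(t_j-t_{j-1})}\,e^{\frac{i}{2}(x_j-x_{j-1})^2/(t_j-t_{j-1})}$ is the composition kernel of the free propagators $K^{t_j - t_{j-1}}$, while the exponential $\exp(-\tfrac{i}{2}\int V)$ is the limit of the product of the multiplication operators $M^{t_j-t_{j-1}}$.

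The main steps are then: (1) show that for $f\in\D_K$ and $(x,t)$ in the neighborhood furnished by Corollary~\ref{cor:corollary1}, the Henstock integral $\psi^{loc}(x,t) = \mathcal{K}_t(x,f(\cdot))$ agrees pointwise with $(U_tf)(x)$ --- this follows from identifying the $\gamma$-fine Riemann sums with the Trotter partial products along a sequence of divisions refining both the time-mesh (so $L(x)$ grows) and the space-edges (so $\delta\to 0$), and invoking Theorem~\ref{thm:solShrodingerEq} together with the uniqueness of limits; (2) since both $\psi^{loc}$ (by Corollary~\ref{cor:corollary2}) and $U_tf$ solve the same IVP \eqref{eq:IVP_loc1}--\eqref{eq:IVP_loc2} on that neighborhood, and the Schr\"odinger IVP has unique solutions in the relevant class, conclude $\psi^{loc} = U_t f$ on the overlap of their domains; (3) define $\psi^{glob}(x,t) := (U_tf)(x)$ for all $t\in\R$; this is the unique global extension because any two solutions agreeing on an open space-time set must agree everywhere by the uniqueness/analyticity of the unitary flow, and it visibly reduces to each $\psi^{loc}$ locally. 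Finally, $\psi^{glob}(x,0) = (U_0f)(x) = f(x)$, the group law $U_{s+t} = U_sU_t$ and strong continuity are inherited from Stone's theorem, and unitarity is automatic since $H$ is selfadjoint; so $\psi^{glob}(x,t) = \mathcal{K}_t(x,f(\cdot))$ is represented by the strongly continuous unitary one-parameter group $U_t$, as claimed. The extension of $\mathcal{K}_t$ from $\D_K$ to all of $L^2(\R)$ is handled as in Corollary~\ref{cor:corollary2} by the density of $\cup_K\D_K$ and the boundedness (indeed isometry) of $U_t$.

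The step I expect to be the main obstacle is (1): rigorously matching the Henstock $\gamma$-fine Riemann sums --- which allow non-uniform, adaptively chosen time-partitions dictated by the gauge $L(x)$ and the space-gauge $\delta(x,N)$ --- with the uniform-mesh Trotter partial products $(K^{t/n}M^{t/n})^n$. The difficulty is that the Henstock limit quantifies over \emph{all} sufficiently fine divisions, not just the equispaced ones appearing in Trotter's theorem, so one must argue either that the non-uniform partitions still converge to the same limit (using the continuity of $V$ at $x$, which controls the potential factor, and the semigroup/Chapman--Kolmogorov identity $K^a K^b = K^{a+b}$ for the free part, which makes the free-propagator product independent of the intermediate time-mesh), or that it suffices to exhibit \emph{one} cofinal sequence of gauges whose $\gamma$-fine divisions are the equispaced ones and then invoke uniqueness of the Henstock integral. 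I would pursue the latter route: construct gauges $\gamma_n = (L_n,\delta_n)$ with $L_n(x)$ forcing the dyadic (or $1/n$-spaced) time-points into $N$ and $\delta_n$ forcing the space-edges below $1/n$, check that the resulting Riemann sums are dominated by / equal to the Trotter products up to errors vanishing by continuity of $V$ and the explicit Fresnel evaluation $\int_\R e^{\frac{i}{2}x^2}dx = \sqrt{2\pi/{-i}}$, and pass to the limit. Care is also needed that the interchange of the differential operator $i(\tfrac12\partial_x^2 - \tfrac12 V)$ with $\int_\R(\cdot)f(y)\,dy$ in Corollary~\ref{cor:corollary1}, which is used implicitly, is justified in the $L^2$ setting --- but this is routine given the local integrability already invoked there.
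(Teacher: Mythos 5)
Your overall architecture coincides with the paper's: obtain the global unitary group $U^t$ from essential selfadjointness via Stone's theorem and the Trotter product formula, show it agrees with the Henstock local solution $\mathcal{K}_t(x,f(\cdot))$ for small $t$, and then transfer unitarity, the group law, and strong continuity. Where you diverge is in the crucial local-agreement step. You propose to match the $\gamma$-fine Henstock Riemann sums directly with the Trotter partial products $(K^{t/n}M^{t/n})^n$ via a cofinal family of gauges forcing equispaced time points and small space cells. The paper does something entirely different and much lighter: it never touches the Riemann-sum structure of the Henstock integral again. Instead, it invokes the Schwartz kernel theorem to view $\mathcal{K}_t$ as a distribution kernel and runs a Duhamel-type interpolation argument (Lemma~\ref{lemma:local_match}): the function $s\mapsto\langle\mathcal{K}_s(x,U^{t-s}f),\phi\rangle$ has vanishing $s$-derivative because both families satisfy the same Schr\"odinger equation (Corollary~\ref{cor:corollary2} for $\mathcal{K}$, Stone's theorem for $U$), so its values at $s=0$ and $s=t$ coincide, giving $\langle U^tf,\phi\rangle=\langle\mathcal{K}_t(x,f(\cdot)),\phi\rangle$ weakly. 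This buys exactly what your route struggles with: it avoids having to reconcile the adaptively chosen, non-uniform Henstock divisions (whose gauges are produced by the existence proof and are not freely at your disposal) with the uniform Trotter mesh. Your approach, if completed, would yield a stronger, more explicit statement (pointwise identification of the kernels), but the obstacle you yourself flag is real and is not resolved in your outline. The paper then extends globally by iterating the local semigroup identity $\mathcal{K}_t\circ\mathcal{K}_s=\mathcal{K}_{t+s}$ (citing local-to-global theory for local representations), whereas you simply declare $\psi^{glob}:=U^tf$; these give the same object, and your version is arguably cleaner once local agreement is in hand.

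One caution on your fallback route (2): you appeal to ``uniqueness of solutions to the Schr\"odinger IVP in the relevant class'' to identify $\psi^{loc}$ with $U^tf$ on the overlap. But $\mathcal{K}_t(x,f(\cdot))$ is a priori only a locally defined distributional solution, not an $L^2$-valued curve, so the standard uniqueness theorems (which rest on energy conservation or on selfadjointness of the generator acting on $L^2$ data) do not apply off the shelf. In this setting the uniqueness assertion \emph{is} essentially the content of the paper's Lemma~\ref{lemma:local_match}, proved by the derivative-is-zero trick; it cannot be cited as known without an argument of that kind. If you adopt the interpolation lemma in place of step (1)/(2), your proposal becomes essentially the paper's proof.
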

\begin{proof}
\indent For $V\in\SA(\R)$ the operator $-\Delta+V$ is essentially selfadjoint, thus by Stone's theorem  \cite{Nel69} the solution to (\ref{eq:IVP_glob1})-(\ref{eq:IVP_glob2}) is given as a strongly continuous group of unitary operators $\psi(x,t)=(U^tf)(x)$ acting in $L^2(\R)$. The operator $U^t$ is related to the operators $K^t$ and $M^t$ (\ref{eq:DefOpKt})-(\ref{eq:DefOpMt}) via Trotter's limit in the strong operator topology \cite{Nel64}:
\begin{eqnarray}
	\psi(x,t)&=&	(U^tf)(x)  \nonumber\\
		    &=&\lim_{n\rightarrow\infty}\left(K_m^{t/n}M_V^{t/n}\right)^n f(x)  \nonumber\\
		 &=&\lim_{n\rightarrow\infty} \left(\frac{n}{2\pi it}\right)^{\frac{n}2}\underbrace{\int_{\R}\cdots\int_{\R}}_{n \textrm{ times}} \prod_{j=1}^n e^{\left[i\frac{n}{2t}\left|x_j-x_{j-1}\right|^2-i\frac t nV(x_j)\right]}f(x_n)dx_1\dots dx_n \nonumber \\
		 &=& \lim_{n\rightarrow\infty} \left(\frac{n}{2\pi it}\right)^{\frac{n}2}\underbrace{\int_{\R}\cdots\int_{\R}}_{n \textrm{ times}} e^{i\sum_{j=1}^n\left[\frac{m}{2}\frac{|x_j-x_{j-1}|^2}{(t/n)^2}-V(x_j)\right]\frac t n}f(x_n)dx_1\dots dx_n.  \nonumber 
	\end{eqnarray}
\indent By corollary \ref{cor:corollary2} there exist a local solution to the IVP  (\ref{eq:IVP_glob1})-(\ref{eq:IVP_glob2}): $\psi^{loc}(x,t) = K_t(x,f(\cdot))$, for small $t$. We would like to compare the local solution $\psi^{loc}(x,t)$ and the global Trotter solution $(U^tf)(x)$. 
\begin{lemma} 
\label{lemma:local_match}
Under the conditions in the statement of the theorem for any $f\otimes\phi\in L^2(\R)\otimes \D$ there exist $\varepsilon>0$ such that for all $|t|<\varepsilon$
  $$<\left(U^tf\right)(x),\phi(x)> = < \K_t(x,f(\cdot)),\phi(x)>.$$
\end{lemma}
\begin{proof}
 \indent By the Schwartz kernel theorem (e.g. see theorem 5.2.1 in \cite{Hor83}) for the linear map $\mathcal{K}_t(x,\cdot):L^2(\R)\rightarrow\D'$ defined in (\ref{eq:_ext_onD_K}), there is one and only one distribution $\widetilde{K}$ on $L^2(\R)\otimes\D$ such that 
\[<\mathcal{K}_t(x,f(\cdot)),\phi(x)> =\widetilde{K}_t(f\otimes\phi)
\]
for all  functions $f(x)\in L^2(\R)$, $\phi(x)\in\D$. Then the following is well-defined for $|s|\le|t|<\varepsilon$
 \begin{align}
\widetilde{K}_s((U^{t-s}f)\otimes \phi)&=<\K_s(x,U^{t-s}f),\phi> \nonumber \\
&=\begin{cases} <\K_0(x,U^tf(\cdot)),\phi(x)>=<\left(U^tf\right)(x),\phi(x)>,&\mbox{if } s = 0 \\ 
< \K_t(x,f(\cdot)),\phi(x)>, & \mbox{if } s =t. \end{cases}  \nonumber
\end{align}
Differentiate this expression in variable $s$, using the corollary \ref{cor:corollary2}:
\begin{align}
 \frac{d}{ds}<\K_s(x,U^{t-s}f),\phi> &= \lim_{h\rightarrow 0}\frac{<\K_{s+h}(x,U^{t-s-h}f),\phi>-<\K_s(x,U^{t-s}f),\phi>}{h}\nonumber \\ 
&=\lim_{h\rightarrow 0}\frac{<\K_{s+h}(x,U^{t-s-h}f),\phi>-<\K_s(x,U^{t-s-h}f),\phi>}{h}\nonumber \\ 
&+ \lim_{h\rightarrow 0}\frac{<\K_s(x,U^{t-s-h}f),\phi>-<\K_s(x,U^{t-s}f),\phi>}{h}\nonumber \\ 
&=\frac{i}{2}\left[  <\left(\Delta-V\right)\mathcal{K}_s(x,\left(U^{t-s}f\right)),\phi> \right. \nonumber \\
&+\left.\left(  <- \left(\Delta-V\right) \mathcal{K}_s(x,\left(U^{t-s}f\right)),\phi>\right)\right] \nonumber\\
&=0. \nonumber
\end{align}
Since the derivative is constant, we arrive to the equality for all $f(x)\in L^2(\R)$, $\phi(x)\in\D$ and $|t|<\varepsilon$:
$$<\left(U^tf\right)(x),\phi(x)> = < \K_t(x,f(\cdot)),\phi(x)>.$$
\end{proof}
Since the operator $U^t(\cdot)$ coincides with $K_t(x,\cdot)$ for small $|t|<\varepsilon$, where $\varepsilon$ is from lemma \ref{lemma:local_match}, we get $K_t(x, f(\cdot))\in L^2_{\textrm{loc}}(\R)$. Also $U^t(\cdot)$ is unitary and  possesses the semigroup property, thus $K_t(x,\cdot)$ is also unitary and locally has the semigroup property 
\begin{equation}
K_t(x,K_s(x,\cdot))=K_{t+s}(x,\cdot) \label{eq:semigroup_prop}
\end{equation}
for $|t|$, $|s|$, $|t+s|<\varepsilon$. Now we can extend the local solution $\K_t(x,f(\cdot))$ in time. For any $t\in\R$, there exist an integer $m$ such that $|t/ m|<\varepsilon$. Then by the semigroup property we have
$$\mathcal{K}_t(x,f(\cdot)) = (\mathcal{K}_{t/m}(x,f(\cdot)))^m=\mathcal{K}_{t/m}(x,\cdots\mathcal{K}_{t/m}(x,f(\cdot))\cdots).$$
Local semigroup (\ref{eq:semigroup_prop}) and unitary properties allow us to extend $\K_t(x,f(\cdot))$ globally as a one-parameter unitary group. To carry out the steps for going from local semigroup to a global unitary one-parameter group see \cite{MR874059, MR863534, MR911991, MR759101, MR727083}. $\K_t(x,f(\cdot))$ is also strongly continuous w.r.t.  $L^2$ since $U^t$ possesses this property.\\
\indent Hence, in conclusion, we get the global solution to the Schr{\"o}dinger equation (\ref{eq:IVP_glob1})-(\ref{eq:IVP_glob2}) realized as a Muldowney path-space average via its local "complex integrals." This solution is a strongly continuous group of unitary operators acting in $L^2(\R)$:
 $$\psi^{glob}(x,t):=\mathcal{K}_t(x,f(\cdot)).$$
\end{proof}

\section{5. Summary}

\indent As a summary, we give the outline of the technical steps in the proof of the main result. Because of the assumed essential selfadjointness, we automatically have the existence of the unitary one-parameter group $U^t$ from the spectral theorem, so a solution to the Schr{\"o}dinger equation as a strongly continuous group of unitary operators $U^t$ acting in $L^2(\R)$. And this unitary one-parameter group $U^t$ is therefore realized as a Trotter approximation. The latter allows us in turn to also realize it locally as Muldowney path-space average $\mathcal{K}_t$.\\ 
  \indent However, the drawback is that while the latter  $\mathcal{K}_t$   arises from the average operation based on Feynman's intuitive idea, and on the rigorous Muldowney path-space calculus, it does not yield operators in $L^2(\R)$. This is because it is based on path-space averages with complex densities. So  $\mathcal{K}_t$  only yields a local solution to the Schr{\"o}dinger equation as a system of Schwartz-distribution kernels, and so it does not yield operators in the Hilbert space $L^2(\R)$. The $L^2$ realization exists only via the unitary one parameter group $U^t$, so the two must be connected in a precise way. This requires the machinery from \cite{MR874059, MR863534, MR911991, MR759101, MR727083} which in turn allows a precise linking from local to global. \\
\indent So our piecing together the local path-space formulas to a global solution to the Schr{\"o}dinger equation is accomplished in the following sequence of steps, and with us making use of the local-to-global theory from \cite{MR874059, MR863534, MR911991, MR759101, MR727083} as follows. We first show that: (i)  $U^t$  and $\mathcal{K}_t$  agree locally on test functions.  (ii) Using the one-parameter group-property for $U^t$ , we conclude that  $K_t$  satisfies a local semigroup property.  (iii) Since $U^t$  is unitary in $L^2$  for all t in $\R$, we conclude that $\mathcal{K}_t$  extends canonically from initially being only a local semigroup, to a global realization as a unitary one-parameter group. (iv) Strong continuity w.r.t.   $L^2$ is automatic since $U^t$ is already strongly continuous.\\
\indent In conclusion, we would like to remark that the results of the main theorem \ref{thm:main_result} hold in more general settings for any potential $V$ for which the Trotter's limit is convergent (\ref{eq:Trotter_solution}). The restriction $V$ to the class $\mathcal{SA}(\R)$ guarantees the convergence of the Trotter's limit, but it's not an equivalent condition. \\

\begin{acknowledgments}
For helpful suggestions and fruitful discussions, the authors are extremely grateful to Professor Wayne Polyzou (Dept of Physics, Univ of Iowa), and to the members of the Mathematical Physics seminar at the Univ of Iowa.
\end{acknowledgments}

 
\bibstyle{plain}
\bibliographystyle{apsrev4-1}
\bibliography{my_biblio}
\end{document}